\algnewcommand\algorithmiclet{\textbf{let}}
\algnewcommand\Let[1]{\algorithmiclet\ #1}
\algnewcommand\algorithmiceach{\textbf{each}}
\algnewcommand\Each[1]{\algorithmiceach\ #1}
\algnewcommand\algorithmicupdate{\textbf{update}}
\algnewcommand\Update[1]{\algorithmicupdate\ #1}
\begin{document}

\title{Efficiently Computing the Shapley Value of Connectivity Games in Low-Treewidth Graphs\thanks{This article shares material with parts of the first author's thesis~\citep{zandthesis}.}}

\titlerunning{Computing the Shapley Value of Connectivity Games in Low-Treewidth Graphs}

\author{Tom C. van der Zanden\and Hans L. Bodlaender \and Herbert J.M. Hamers 
}

\authorrunning{van der Zanden et al.} 

\institute{T.C. van der Zanden \at
              Department of Data Analytics and Digitalisation, Maastricht University, Maastricht, The Netherlands \\
              \email{T.vanderZanden@maastrichtuniversity.nl}
           \and
           H.L. Bodlaender \at
              Department of Information and Computing Science, Utrecht University, Utrecht, The Netherlands \\
              \email{H.L.Bodlaender@uu.nl}
           \and
           H.J.M. Hamers \at
              Department EOR (TiSEM) and TIAS, Tilburg University, Tilburg, The Netherlands \\
              \email{H.J.M.Hamers@tilburguniversity.edu}
}

\date{Received: date / Accepted: date}

\maketitle

\begin{abstract}
The Shapley value is the solution concept in cooperative game theory that is most used in both theoretical as practical settings. Unfortunately, computing the Shapley value is computationally intractable in general.
This paper focuses on computing the Shapley value of (weighted) connectivity games. For these connectivity games, that are defined on an underlying (weighted) graph, computing the Shapley value is $\#\P$-hard, and thus (likely) intractable even for graphs with a moderate number of vertices.
We present an algorithm that can efficiently compute the Shapley value if the underlying graph has bounded treewidth.
Next, we apply our algorithm to several real-world (covert) networks. We show that our algorithm can compute exact Shapley values for these networks quickly, whereas in prior work these values could only be approximated using a heuristic method.
Finally, it is shown that our algorithm can also compute the Shapley value time efficiently for several larger (artificial) benchmark graphs from the PACE 2018 challenge.
\keywords{centrality \and social network analysis \and treewidth \and graph theory \and game theory}
\subclass{05C85 \and 91-04 \and 91-05 \and  91-08 \and 91A80 }
\end{abstract}

\section*{Delaractions}

\emph{Funding.} No funds, grants, or other support was received. A large part of this research was done while the first author was associated with Universiteit Utrecht.

\emph{Conflicts of interest/Competing interests.} The authors have no conflicts of interest to declare that are relevant to the content of this article.

\emph{Code availability.}  We have made our source code, as well as the graphs/data used for the experiments, available on GitHub \citep{ShapleyTreewidth}.

\emph{Availability of data and material.} We have made our source code, as well as the graphs/data used for the experiments, available on GitHub \citep{ShapleyTreewidth}.

\clearpage

\section{Introduction}

Motivated by the need to identify important vertices in networks (graphs), many measures for ranking vertices have been suggested. Among these are the classical centrality measures, such as betweenness, closeness and degree \citep[see e.g.,][]{wassermanfaust}.
However, the drawback is that all these measures only take the structure of the network into account.
 In light of this, \emph{game-theoretic centrality} measures have received considerable interest.
These measures take not only the structure of the network into account, but enables also to include special information with respect to individuals and links in the network. Further, such a measure takes also into account the coalitional strength of members in the network.

The Shapley value \citep{shapleyvalue} is most used as centrality measure in these situations. This is not surprising since the Shapley value satisfies intuitive properties \citep[cf.][]{shapleyvalue,Young1985} that are also considered as fair in many situations in practice (e.g., airport landing fees \citep{Littlechild1973}, water transportation \citep{Deidda2009}, genetics \citep{Moretti2007} and terrorism \citep{lindelauf13}).
More applications of the Shapley value in different fields can be found in the survey of \citet{Moretti2008}.

The biggest challenge in applying the Shapley value to real-world applications is the time required to compute it, which generally increases exponentially with the number of players. More precisely, for many games it can be shown that computing the Shapley value is a $\#\P$-hard problem \citep{Faigle1992,michalak2013efficient,michalak2013computational}.


In this paper we focus on the Shapley value for network games in which the vertices are associated with players. In particular we consider connectivity games, introduced by \citet{amer}. These are
 $\{0,1\}$-valued games in which a coalition has the value $1$ if the subgraph induced by this coalition is connected, and has value $0$ otherwise. In a vertex-weighted connectivity game, introduced by \citet{lindelauf13}, the value equals the sum of the vertex weights of a coalition that induces a connected subgraph, and $0$ if the subgraph is not connected.
 
   Lindelauf et al. studied these centrality measures in the context of identifying the most important vertices in terrorist networks \citep{lindelauf13,approximation}. They considered two networks: one (due to \citet{koschade}) consisting of 17 terrorists involved in a 2002 bombing in Bali, the second \citep[due to][]{krebs} consisting of 69 terrorists involved in planning and executing the 9/11 attacks. Whereas for the first network they were able to compute the exact Shapley values, for the second networks this was infeasible and they considered only the part of the network made up by the 19 hijackers that actually carried out the attack. This network in shown in Figure~\ref{fig:911network}.

\begin{figure}
    \centering
    \includegraphics[width=0.95\textwidth]{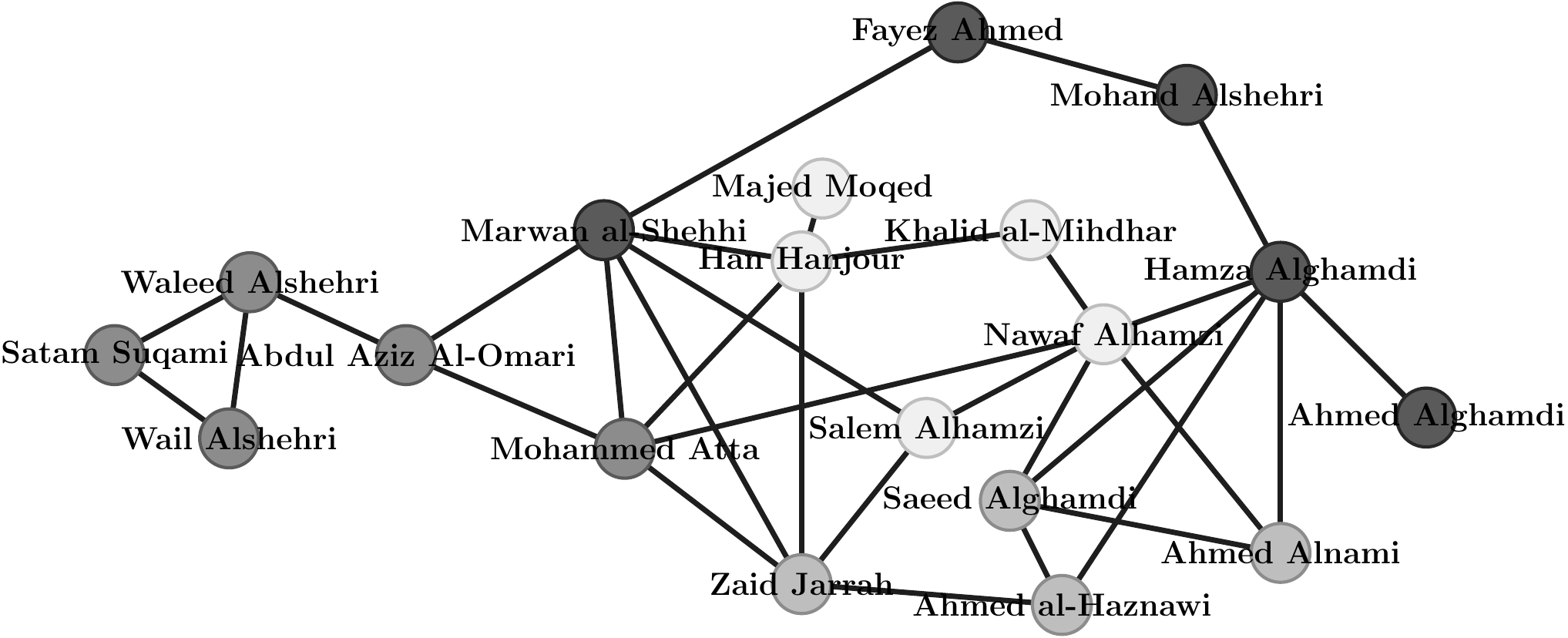}
    \caption{Graph showing the connections between the 19 hijackers that carried out the 9/11 attacks. Vertices are coloured according to the flights they were on. Note that the full network consists of 69 vertices.}
    \label{fig:911network}
\end{figure}

\citet{michalak2013computational} showed that computing the Shapley value  for the unweighted game is unfortunately $\#\P$-hard. As such, it is unlikely that an efficient algorithm for computing these values exists. On the other hand, Michalak et al. also proposed an algorithm that is slightly more efficient than the brute-force approach used by Lindelauf et al., called \emph{FasterSVCG}. Using this algorithm, the authors computed Shapley values for a larger version of the 9/11 network, with 36 vertices (corresponding to the hijackers and some key accomplices). Their approach, rather than considering all $2^{|V|}$ coalitions, considers only the \emph{connected} coalitions, of which there may be considerably fewer than $2^{|V|}$. However, in the worst case, the number of connected coalitions may still be exponential. As such, for the full 69-vertex network, running this algorithm is still infeasible.

\citet{michalak2013computational} also considered an approximation method based on random sampling and studied its performance on the 36-vertex 9/11 network. \citet{approximation} proposed a different sampling method, \emph{structured random sampling}, that aims to be more efficient than random sampling. Using this method, they computed approximate Shapley values for the 69-vertex 9/11 network. Unfortunately, neither method comes with any formal guarantees on the quality of the approximation.

While, in general, one should not expect to find an efficient algorithm for computing the exact Shapley values of these games (due to the $\#\P$-hardness), we can attempt to exploit the structure that the networks may have in order to obtain more efficient algorithms.

In this paper, we show that the Shapley value (for both the weighted and the unweighted game) can be computed efficiently on graphs of bounded \emph{treewidth}. Exploiting the fact that a graph has bounded treewidth is a celebrated and widely-used technique from theoretical computer science, and many problems are known to be solvable in polynomial time on graphs of bounded treewidth \citep{overviewtw}. We use this idea to derive a fixed-parameter tractable algorithm for computing Shapley values. Our result is not merely theoretical: we also provide an implementation and show that it can be used to compute Shapley values for graphs of practical interest.


Treewidth is often said to be a measure of how ``tree-like'' a graph is. If a graph, on a macroscopic scale, resembles a tree, then treewidth in some sense measures how much it deviates from being a tree locally. The treewidth of a graph is defined in terms of \emph{tree decompositions}. Such a decomposition is based around a tree, to each vertex of which is associated a set of vertices of the original graph in a way that respects the structure of the graph. The largest number of vertices associated in this way to any given vertex of the tree, determines the treewidth of the graph. We will make this definition more precise in the following section.

Thanks to the simple structure of trees, many $\NP$-hard problems can be solved in polynomial time on trees. Very often, such problems can also be solved in polynomial time on graphs of bounded treewidth: while the best known algorithms for such problems in general require exponential time, we can often construct an algorithm that is only exponential in the treewidth (by doing some exponential computation within each vertex of the tree, which contains a bounded number of vertices of the original graph), and then using the properties of the problem that allow it to be solved in polynomial time on trees to combine the results computed within each tree vertex to a solution for the original problem.

Using our approach, we are able to compute the exact Shapley value for the full 69-vertex network of the 9/11 attacks. 
We are also able to compute the exact Shapley values for some much larger networks, having up to a thousand vertices.

Of course, our method crucially depends on the network having bounded treewidth. Fortuitously, the network of the 9/11 attacks has a rather low treewidth of only $8$. In general one cannot expect social networks to have small treewidth: social networks often have large cliques, and the size of the largest clique forms a lower bound on the treewidth of a graph \citep[see e.g.,][for a study of tree decompositions of social networks]{socialtrees}. However, terrorist and criminal organizations are often well-served by keeping their networks sparsely connected, as this helps to avoid detection and as such one would not expect large cliques \citep{Lindelauf2009}. As another example of networks that may have low treewidth, the interaction networks in a hierarchical organization would naturally be tree-like.

Our goal is to develop an algorithm that, given a graph $G$ with $n=|V|$ and tree decomposition of $G$ of width $tw$, computes the Shapley value in time $f(tw)n^{O(1)}$, where $f$ is some exponential function and $n^{O(1)}$ a (low-degree) polynomial. As such, we hope to ``hide'' the exponential behaviour of computing the Shapley value in a function that depends only on the treewidth of the graph, and obtain an algorithm whose running time is (for graphs of bounded treewidth) polynomial in $n$.


Specifically, we show that for a graph $G$ of treewidth $tw$ and a vertex $v\in V$, 
the Shapley value of the vertex-weighted connectivity game of vertex (player) $v$ can be computed in time $2^{O(tw\log{tw})}n^4\log {n}$. Note that our algorithm for computing the Shapley value requires multiplying large ($O(n)$-bit) integers; this running time is achieved if using the $O(n \log n)$-time algorithm of \citet{multiplication-algo}. Moreover, we usually want to know 
the Shapley value for all vertices rather than for a specific vertex. Rather than running the previous algorithm $n$ times, we also show that computing the Shapley value for all vertices can be done in the same time, by reusing the intermediate results of previous computations.

For instance, the graph considered by Lindelauf et al. that represents the communications between the perpetrators of the 9/11 attacks, consists of $69$ vertices but only has treewidth $8$. While evaluating all $2^{69}$ subsets of vertices is clearly infeasible, our algorithm can compute the Shapley value in a couple of seconds thanks to the low treewidth of the graph.

Given a graph $G=(V,E)$, the algorithm of \citet{michalak2013computational} runs in time $O((|V|+|E|)|C|)$\footnote{\citet{michalak2013computational} ignore in the analysis of their running time the fact that the numbers involved in the computation of the Shapley value can get exponentially large, and thus we can no longer presume that arithmetic operations can be done in $O(1)$ time. Our running times do account for this, and are thus a factor $n\log{n}$ higher.}, where $|C|$ denotes the number of connected induced subgraphs of $G$. This algorithm, while offering a moderate improvement over the brute-force approach still requires exponential time on almost all interesting classes of graphs. We remark that there exist graphs of low treewidth that have a very large number of connected induced subgraphs (for example, the star on $n$ vertices has treewidth $1$ and more than $2^{n-1}$ induced connected subgraphs), while graphs with a small number of induced connected subgraphs also have low treewidth: a graph with at most $|C|$ induced connected subgraphs has treewidth at most $2\log{|C|}$ (in fact, pathwidth at most $2\log{|C|}$: if we fix some arbitrary vertex $v$, then there are at most $\log{|C|}$ vertices at distance exactly $r$ from $v$). While this bound is tight up to a constant factor (for instance on an $n$-vertex clique), in many instances the treewidth is much smaller than $\log{|C|}$.

Treewidth was first considered in the context of connectivity games by \citet{aziz}. They considered a game wherein the players are the edges and a winning coalition is one that spans the vertex set. They proved that computing the Banzhaf index can be done in polynomial time for a graph with bounded treewidth, but gave no experimental results and stated as an open problem whether Shapley values could be computed in a similar manner.

Recently, \citet{greco17} proposed using treewidth to compute Shapley values for matching games in graphs. In the matching game, the value of a coalition is the size of the maximum matching in the graph induced by that coalition. However, their algorithm is based on a formulation in Monadic Second Order Logic and the application of theoretical frameworks that allow counting of satisfying assignments of MSO formulas. For graphs of bounded treewidth, this yield a polynomial-time algorithm, where the degree of the polynomial may depend on the treewidth. In contrast, the degree of the polynomial in our algorithm is fixed, and only the constant factor in the running time is affected by the treewidth (i.e., we obtain a \emph{fixed-parameter tractable} algorithm). Moreover, due to the application of these algorithmic metatheorems, their algorithm is not very efficient in practice: \citet{greco17} report that, even for a graph of treewidth only $3$ with $30$ vertices, their implementation (using the MSO solver Sequoia \citep{langer13}) took nearly $9$ minutes to determine the Shapley values. We are able to process much more complex (i.e., higher treewidth) graphs with significantly more vertices in a much shorter time.

Of course, our method crucially depends on being given a tree decomposition of low width. Note that while in general, computing a minimum width tree decomposition is an $\NP$-hard problem, for many graphs of practical interest this can be done efficiently \citep[see e.g.,][for an overview of recent -- and very competitive -- implementations for computing treewidth]{pacechallenge}. 

This paper is organized as follows. We will first present some preliminaries on both graph theory and game theoretic centrality, then present the algorithm for computing the Shapley value: we first show how we can compute the Shapley value for one specific vertex (that appears in the root bag of our decomposition), then we show how a (nice) tree decomposition can be modified to quickly compute the Shapley value for all vertices (more quickly than computing it for each vertex individually). We then present an experimental evaluation of our algorithm, evaluating the performance of our network on several benchmark graphs and real-world examples of covert networks.

\section{Preliminaries}

\subsection{Graphs and Treewidth}

Let $G=(V,E)$ be an undirected graph, where $V$ is its vertex set and $E$ its edge set. To avoid confusion when dealing with multiple graphs (with different vertex sets), we may use the notation $V(G)$ to refer to the vertex set $V$ of $G$ (and similarly, $E(G)$ to refer to its edge set $E$). Given a subset $V'\subseteq V$, we denote by $G[V']$ the subgraph of $G$ induced by $V'$. We say that a vertex set $S$ separates vertex sets $A,B$ if any path from a vertex in $A$ to a vertex in $B$ must necessarily include a vertex in $S$. Where confusion is unlikely, we may write $v\in G$ instead of $v\in V(G)$ and $V'$ instead of $G[V']$. In the following, we let $n=|V(G)|$.

Given a graph $G=(V,E)$, a tree decomposition of $G$ is a tree $T$ together with for each vertex $t\in V(T)$ a subset $X_t\subseteq V(G)$ (called \emph{bag}) such that

\begin{enumerate}
	\item for all $v\in V(G)$, there is a $t\in V(T)$ such that $v\in X_t$,

	\item for all $(u,v)\in E(G)$, there is a $t\in V(T)$ such that $\{u,v\}\subseteq X_t$,
	
	\item for any $v\in V(G)$, the subset $\{t\in V(T) \mid v\in X_t\}$ induces a connected subtree of $T$.
\end{enumerate}

The \emph{width} of a tree decomposition is $\max_{t\in T} |X_t| - 1$, and the treewidth of a graph $G$ is the minimum width taken over all tree decompositions of $G$. To avoid confusion, from now on we shall refer to the vertices of $T$ as ``nodes'', and ``vertex'' shall refer exclusively to vertices of $G$.

We may designate an arbitrary node of $T$ as \emph{root} of the tree decomposition. Given a node $t\in T$, we denote by $G[t]$ the subgraph of $G$ induced by $X_t$ and the vertices in bags of nodes which are descendants of $t$ in $T$ (i.e. bags corresponding to vertices which can be reached from $t$ without going closer to the root). The following well-known lemma is an important fact, stating that the bags of a tree decomposition are separators:

\begin{lemma}[equivalent to \citet{fptbook}, Lemma 7.3]\label{lem1}
The vertices in $X_t$ separate $G[t]$ from the rest of the graph, i.e., for every edge $(u,v)\in E(G)$ for which $u\in G[t]$ and $v\not \in G[t]$, it holds that $u\in X_t$.
\end{lemma}

Any tree decomposition can be converted (in linear time) into a decomposition in \emph{nice} form, that is, each of the nodes $t\in T$ is one of four types \citep{kloks94}:

\begin{itemize}
	\item \textbf{Leaf:} $t$ is a leaf of $T$, and $|X_t|=1$.
	
	\item \textbf{Introduce:} $t$ has a single child node $t'$. $X_t\supset X_{t'} $ and $X_t$ contains exactly one vertex $v\in V(G)$ that is not in $X_{t'}$, i.e., $X_t=X_{t'}\dot{\cup} \{v\}$. We say that $v$ is \emph{introduced} in $t$.
	
	\item \textbf{Forget:} $t$ has a single child node $t'$. $X_t \subset X_{t'}$ and $X_{t'}$ contains exactly one vertex $v\in V(G)$ that is not in $X_t$, i.e., $X_{t'}=X_t\setminus \{v\}$. We say that $v$ is \emph{forgotten} in $t$.
	
	\item \textbf{Join:} $t$ has exactly two children $l,r$. Moreover, $X_l=X_r=X_t$.
\end{itemize}

If the tree decomposition is given in nice form, we can specify an algorithm simply by specifying how it processes each of these four cases. Moreover, we can assume that the size of a (nice) tree decomposition (i.e. the number of bags) is linear in $n$ \citep{kloks94}.

\subsection{Shapley Value and Game-Theoretic Centrality}

A \emph{coalitional game} consists of a set of players $N$ (the \emph{grand coalition}) together with a \emph{characteristic function} $w:2^N \to \mathbb{R}$ such that $w(\emptyset) = 0$. Given a characteristic function, the Shapley value $\Phi_i(w)$ of a player $i$ is defined as \citep{shapleyvalue}:

\begin{equation}\label{eqn:shapleyvalue}
\Phi_i(w) = \sum_{S\subseteq N\setminus\{i\}} \frac{|S|!(|N|-|S|-1)!}{|N|!} (w(S\cup \{i\}) - w(S))
\end{equation}





In this paper, we consider coalitional games where the players correspond to vertices in a graph. The connectivity game $v^{conn}$, introduced by \citet{amer}, is given by the weight function:

\[
v^{conn}(S) =
    \begin{cases}
      1 & \text{if}\ G[S] \ \text{is connected and}\ |S|>1\text{,} \\
      0 & \text{otherwise.}
    \end{cases}
\]

Note that a coalition consisting of a single player, while connected, has a value of $0$.

\citet{lindelauf13} consider vertex-weighted connectivity games as a generalization of connectivity game. They assume each vertex $i$ has a weight $w(i)$ and the corresponding
vertex-weighted connectivity game $v^{wconn2}$ is defined as follows:


\[
v^{wconn2}(S) =
    \begin{cases}
      \sum_{i\in S} w(i) & \text{if}\ G[S] \ \text{is connected,} \\ 
      0 & \text{otherwise.}
    \end{cases}
\]


In this paper, we give an algorithm for computing the Shapley value associated with $v^{wconn2}$, that, as a byproduct, also computes the Shapley value associated with $v^{conn}$.


\section{The Algorithm}

In this section, we present our algorithm for computing the Shapley values of $v^{conn}$ and $v^{wconn2}$. We begin by presenting an algorithm that computes the Shapley value for a specific vertex $v$ if a nice tree decomposition which contains $v$ as sole vertex in its root bag is given. We then show how an arbitrary (nice) tree decomposition can be modified to contain any vertex in its root bag, allowing us to evaluate the Shapley value for any vertex. Finally, we show how to avoid the extra factor $n$ that would appear in the running time if we computed the Shapley value for each vertex individually, by reusing parts of the computation.

\begin{theorem}\label{thm:main}
Given a graph $G=(V,E)$ and a nice tree decomposition $T$ of width $tw$ such that the root bag $X_r$ contains only a single vertex $v$, $\phi_v(v^{conn})$ and $\phi_v(v^{wconn2})$ can be computed in time $2^{O(tw\log{tw})}n^4\log{n}$.
\end{theorem}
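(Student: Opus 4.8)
The plan is to rewrite the Shapley value so that it depends only on a handful of aggregate quantities about the connected induced subgraphs of $G$, and then to compute those quantities by a dynamic program over the nice tree decomposition. First I would group the terms of Equation~\eqref{eqn:shapleyvalue} by the size $s=|S|$ of the coalition, writing $\gamma(s)=\frac{s!(n-s-1)!}{n!}$ for the common coefficient. Since $v^{wconn2}(S)=\sum_{j\in S}w(j)$ when $G[S]$ is connected and $0$ otherwise, only coalitions for which $G[S]$ or $G[S\cup\{v\}]$ is connected contribute. Splitting $\phi_v(v^{wconn2})=\sum_S\gamma(|S|)\,v^{wconn2}(S\cup\{v\})-\sum_S\gamma(|S|)\,v^{wconn2}(S)$ and collecting terms by the number of vertices $k$ of the connected subgraph involved yields
$$\phi_v(v^{wconn2})=\sum_{k\ge 1}\gamma(k-1)\,W^{\ni v}(k)-\sum_{k\ge 1}\gamma(k)\,W^{\not\ni v}(k),$$
where $W^{\ni v}(k)$ (resp.\ $W^{\not\ni v}(k)$) is the sum of total weights $\sum_{j\in T}w(j)$ over all connected induced subgraphs $T$ on exactly $k$ vertices that contain $v$ (resp.\ do not contain $v$). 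The identical derivation for $v^{conn}$, using the counts $N^{\ni v}(k),N^{\not\ni v}(k)$ of such subgraphs in place of their weights and discarding the $k=1$ terms (a single vertex has value $0$), gives $\phi_v(v^{conn})$. Hence it suffices to compute, for every $k$, the number and total weight of the connected induced subgraphs on $k$ vertices, separated by whether they contain $v$.

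Next I would compute these quantities by a bottom-up dynamic program. For a node $t$ the state is a triple $(Y,\mathcal P,k)$ with $Y\subseteq X_t$, a partition $\mathcal P$ of $Y$, and an integer $k$; the entry $D_t[Y,\mathcal P,k]$ counts the sets $U\subseteq V(G[t])$ with $U\cap X_t=Y$ and $|U|=k$ such that every connected component of $G[U]$ meets $X_t$ and $\mathcal P$ is exactly the partition of $Y$ induced by those components, while a parallel table $\widehat D_t[Y,\mathcal P,k]$ stores $\sum_{j\in U}w(j)$ summed over the same sets. Lemma~\ref{lem1} guarantees that once a component of $G[U]$ stops meeting the current bag it is separated from the rest of the graph and can never be reconnected, which is what makes $\mathcal P$ an adequate summary of a partial solution. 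The four node types are handled in the usual way: a leaf seeds the empty set and the single-vertex set; an introduce node for $x$ either leaves $U$ untouched or adds $x$, merging $\{x\}$ with the blocks of its bag-neighbours (it has no neighbours below the bag) and updating $k$ and the weight; a join node combines a left and a right state on the same $Y$ by taking the join of $\mathcal P_l,\mathcal P_r$ in the partition lattice and convolving over the size index, using $k=k_l+k_r-|Y|$ and $w(U)=w(U_l)+w(U_r)-w(Y)$ to avoid double-counting the shared bag vertices. The crucial recording happens at forget nodes for $x$: states with $x\notin Y$ pass through, and for $x\in Y$ I delete $x$ from its block, except that if $x$'s block is a singleton and other blocks remain the set is doomed to stay disconnected and is discarded, whereas if it is the only block the set $U$ is a completed connected subgraph whose count and weight I add to $N^{\not\ni v}(k),W^{\not\ni v}(k)$. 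Because $X_r=\{v\}$, any subgraph containing $v$ is never forgotten and survives to the root, where $D_r[\{v\},\{\{v\}\},k]$ and $\widehat D_r[\{v\},\{\{v\}\},k]$ give exactly $N^{\ni v}(k)$ and $W^{\ni v}(k)$, while every subgraph avoiding $v$ has its topmost bag below the root and is recorded exactly once when its last bag vertex is forgotten.

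For the running time, the number of pairs $(Y,\mathcal P)$ over a bag of size $tw+1$ equals $\sum_{j}\binom{tw+1}{j}B_j=B_{tw+2}=2^{O(tw\log tw)}$, which is the source of the stated treewidth dependence, and the size index contributes a further factor $n$. The costly step is the convolution at join nodes: multiplying two polynomials of degree at most $n$ whose coefficients are the $O(n)$-bit integers $D_l,D_r$ (counts can be as large as $2^n$). Schoolbook polynomial multiplication uses $O(n^2)$ integer multiplications, each costing $O(n\log n)$ by the algorithm of \citet{multiplication-algo}, for $O(n^3\log n)$ per convolution; summing over the $2^{O(tw\log tw)}$ choices of $(Y,\mathcal P_l,\mathcal P_r)$ and the $O(n)$ join nodes gives the claimed $2^{O(tw\log tw)}n^4\log n$. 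All remaining work, including the per-node updates at introduce and forget nodes and the final assembly of $\phi_v$ over the common denominator $n!$, is dominated by this.

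The main obstacle I anticipate is the correctness of the counting rather than the running time: I must argue that the partition states faithfully track connectivity across join nodes, that the discard-versus-record decision at forget nodes counts every connected subgraph exactly once and never counts a disconnected set, and that the dichotomy between subgraphs reaching the root and those recorded earlier coincides exactly with containing or not containing $v$. Establishing these invariants, together with the inclusion–exclusion bookkeeping that keeps $\widehat D$ consistent with $D$ under the join step, is where the real work lies.
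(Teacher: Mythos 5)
Your proposal follows essentially the same route as the paper: the same reformulation of the Shapley value by coalition size, the same dynamic-programming state $(Y,\mathcal{P},k)$ (the paper's characteristic $(R,\sim,k)$) with parallel count and weight tables, the same handling of the four node types, and the same $2^{O(tw\log{tw})}n^4\log{n}$ accounting dominated by big-integer multiplications at join nodes. The only cosmetic difference is that you flush completed connected subgraphs not containing $v$ into global accumulators at forget nodes, whereas the paper keeps them in the tables as entries with characteristic $(\emptyset,\emptyset,k)$ and reads off $n_r(\emptyset,\emptyset,k)$ and $w_r(\emptyset,\emptyset,k)$ at the root.
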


Pseudocode for our algorithm is given in Listing~\ref{lst:main}, which uses procedures given in Listings~\ref{lst:leaf}, \ref{lst:introduce}, \ref{lst:forget} and \ref{lst:join}. Note that where we say we \emph{update} a value, if it has not been set previously, we initialize it to $0$.

We give the algorithm for computing $\phi_v(v^{wconn2})$; the results obtained from this algorithm can also be used to compute $\phi_v(v^{conn})$. We first show that for a given $v\in V$, the (single) value $\phi_v(v^{wconn2})$ can be computed in time $2^{O(tw\log{tw})}n^4\log{n}$. 

\begin{algorithm}[h]
\caption{Algorithm for computing the Shapley value -- Main Procedure}
\label{lst:main}
\begin{algorithmic}
	\Require Nice tree decomposition $(T,\{X_t \mid t\in T\})$ with root node $r$, such that $X_r=\{v\}$
	\Ensure Shapley value $\phi_v(v^{conn})$ (resp. $\phi_v(v^{wconn2})$).
	\Procedure{ComputeShapley}{$T,r$}
		\State \Let $r$ be the root of $T$
		\For{\Each node $t\in T$ in post-order}
		    \If{$t$ is a leaf node}
		        \State \Let $(w_t, n_t) = \textsc{Leaf}(t)$
		    \EndIf
		
    		\If{$t$ is an introduce node}
		        \State \Let $t'$ be the child of $t$
		        \State \Let $n_{t'}$, $w_{t'}$ be the previously computed tables for $t'$
		        \State \Let $(w_t, n_t) = \textsc{Introduce}(t, n_{t'}, w_{t'})$
		    \EndIf
		
		    \If{$t$ is a forget node}
		        \State \Let $t'$ be the child of $t$
		        \State \Let $n_{t'}$, $w_{t'}$ be the previously computed tables for $t'$
		        \State \Let $(w_t, n_t) = \textsc{Forget}(t, n_{t'}, w_{t'})$
		    \EndIf
		
		    \If{$t$ is an join node}
		        \State \Let $l,r$ be the children of $t$
		        \State \Let $n_l, n_r, w_l, w_r$ be the previously computed tables for $l, r$
		        \State \Let $(w_t, n_t) = \textsc{Join}(t, n_l, n_r, w_l, w_r)$
		    \EndIf
		\EndFor
		\State \Let $\phi_v(v^{conn}) = 0, \phi_v(v^{wconn2}) = 0$
		\State \Let $\phi_v(v^{conn}) = \frac{1}{|V|!} \Bigg(n_r(\{v\},\{\{v\}\}, 2) \cdot (|V|-2)! + \Sigma_{i=2}^{|V|-1} \Big(n_r(\{v\},\{\{v\}\},i + 1) - n_r(\emptyset,\emptyset,i)\Big)\cdot i! \cdot (|V|-i-1)!\Bigg)$
		\State \Let $\phi_v(v^{wconn2}) = \frac{1}{|V|!} \Sigma_{i=1}^{|V|-1} \Big(w_r(\{v\},\{\{v\}\},i + 1) - w_r(\emptyset,\emptyset,i)\Big)\cdot i! \cdot (|V|-i-1)!$
		\State \Return $(\phi_v(v^{conn}),\phi_v(v^{wconn2}))$
	\EndProcedure
\end{algorithmic}
\end{algorithm}

Using Equation~(\ref{eqn:shapleyvalue}) we obtain the following more suitable expression for computing $\phi_v(v^{wconn2})$, by splitting the summation into different terms, depending on the cardinality $k$ of $S$:

\begin{equation*}
\resizebox{0.999\hsize}{!}{$\begin{split}
\phi_v(v^{wconn2}) & = \sum_{k=0}^{|V|-1} \Bigg ( \frac{k!(|V|-k-1)!}{|V|!} \sum_{S\subseteq V\setminus\{v\},|S|=k}  (v^{wconn2}(S\cup \{v\}) - v^{wconn2}(S)) \Bigg ) \\
& = \sum_{k=0}^{|V|-1} \Bigg ( \frac{k!(|V|-k-1)!}{|V|!} \Big ( \sum_{S\subseteq V\setminus\{v\},|S|=k}  v^{wconn2}(S\cup \{v\}) - \sum_{S\subseteq V\setminus\{v\},|S|=k} v^{wconn2}(S) \Big ) \Bigg )
\end{split}$}
\end{equation*}

Let $S\subseteq V$. Since $v^{wconn2}(S)=0$ whenever $S$ induces a subgraph with more than one connected component, the problem of computing $\phi_v(v^{wconn2})$ reduces to computing, for each $k$, the total weight of \emph{connected} subsets $S\subseteq V(G)$ with $|S|=k$ and $v\in S$ (resp. $v\not \in S$). For $v^{conn}$, we simply need to \emph{count} the number of such subsets rather than compute their total weight.


As is standard for algorithms using dynamic programming on tree decompositions, for each node $t$ of the tree decomposition we consider the subgraph $G[t]$. For each such subgraph, we consider the subsets (coalitions) $S\subseteq G[t]$.

Recall that if $S\subseteq V(G)$ is not connected, it (by definition) does not contribute to the Shapley value. Call a subset $S\subseteq V(G[t])$ \emph{good} if the subgraph $G[S]$ induced by $S$ is connected or every connected component of $G[S]$ has non-empty intersection with $X_t$. By definition, the empty set is good.

\FloatBarrier

Our algorithm works by, for each $t\in T$, considering all good subsets $S\subseteq V(G[t])$. The following Lemma shows that subsets that are not good do not count towards the Shapley value of the game, and thus we can safely disregard them.

\begin{lemma}\label{lem:conninduced}
Let $S\subseteq V(G)$ induce a connected subgraph of $G$ and let $t\in T$. Then
$S \cap V(G[t])$ is a good subset of $G[t]$.
\end{lemma}

\begin{proof}
By contradiction. Suppose $S\cap V(G[t])$ is not connected. Then some component of $S\cap V(G[t])$ has an empty intersection with $X_t$. Then $S$ can not be connected, since by Lemma \ref{lem1} $X_t$ separates $G[t]$ from the rest of the graph.
\end{proof}


Of course, there can still be exponentially many good subsets. The key to our algorithm is that for each such subset $S$, we do not need to know exactly how the subset is made up: if we know how the subset $S$ behaves within $X_t$, we know how it interacts with the rest of the graph (outside of $G[t]$), since $X_t$ is a separator. Subsets which behave similarly within $X_t$ can be grouped together, thus speeding up the computation. We classify the subsets into groups depending on their interaction with the rest of the graph. Specifically, each subset $S\subseteq G[t]$ has a \emph{characteristic} (w.r.t. $G[t]$) that consists of

\begin{itemize}
\item the intersection $R=S\cap X_t$,

\item an equivalence relation $\sim$ on $S\cap X_t$ such that $a\sim b$ if and only if $a$ and $b$ are in the same connected component of the subgraph induced by $S$,

\item the cardinality of $S$, $k=|S|$.
\end{itemize}

For the equivalence relation $\sim$, note that each element can be in one of at most $tw+1$ equivalence classes, a trivial upper bound on the number of such relation is $(tw+1)^{tw+1}$. The number of subsets $R$ is at most $2^{tw+1}$; this is dominated by the number of equivalence relations. $k$ can take values in the range $0,\ldots,n$.  Thus, the total number of distinct characteristics is $2^{O(tw\log{tw})}n$. For every node $t\in T$ and each characteristic $(R, \sim, k)$, we will compute

\begin{itemize}
\item $n_t(R, \sim, k)$: the number of good subsets $S\subseteq G[t]$ with characteristic $(R, \sim, k)$,
\item $w_t(R, \sim, k)$: the total weight of all good subsets $S\subseteq G[t]$ with characteristic $(R, \sim, k)$.
\end{itemize}

Note that the weight of a subset $S\subseteq G[t]$ is simply the sum of the vertex weights, i.e. $\sum_{v\in S} w(v)$.

Note that if $r$ is the root of $T$, and $X_r=\{v\}$. Then $w_r(\{v\},\{v\}, k)$ is exactly the total weight of connected subsets $S\subseteq V(G)$ with $|S|=k$ and $v\in S$, whereas $w_r(\emptyset,\emptyset, k)$ is the total weight of connected subsets of size $k$ not including $v$. This gives us exactly the information we need to compute $\phi_v(v^{wconn2})$.

\begin{figure}
     \centering
     \hfill
     \subfloat[t][Example of a graph] {
         \includegraphics[width=0.45\textwidth,valign=c]{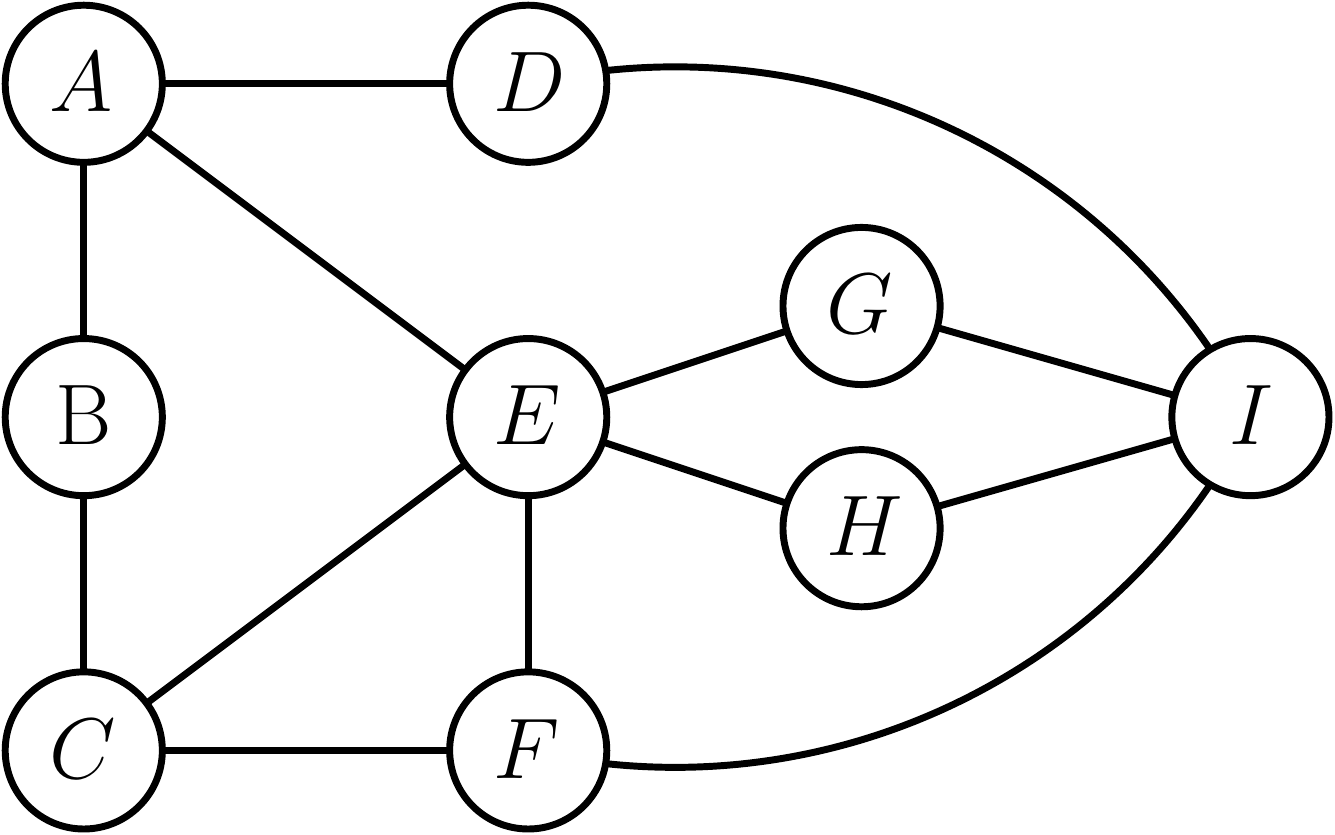}
         \vphantom{\includegraphics[width=0.45\textwidth,valign=c]{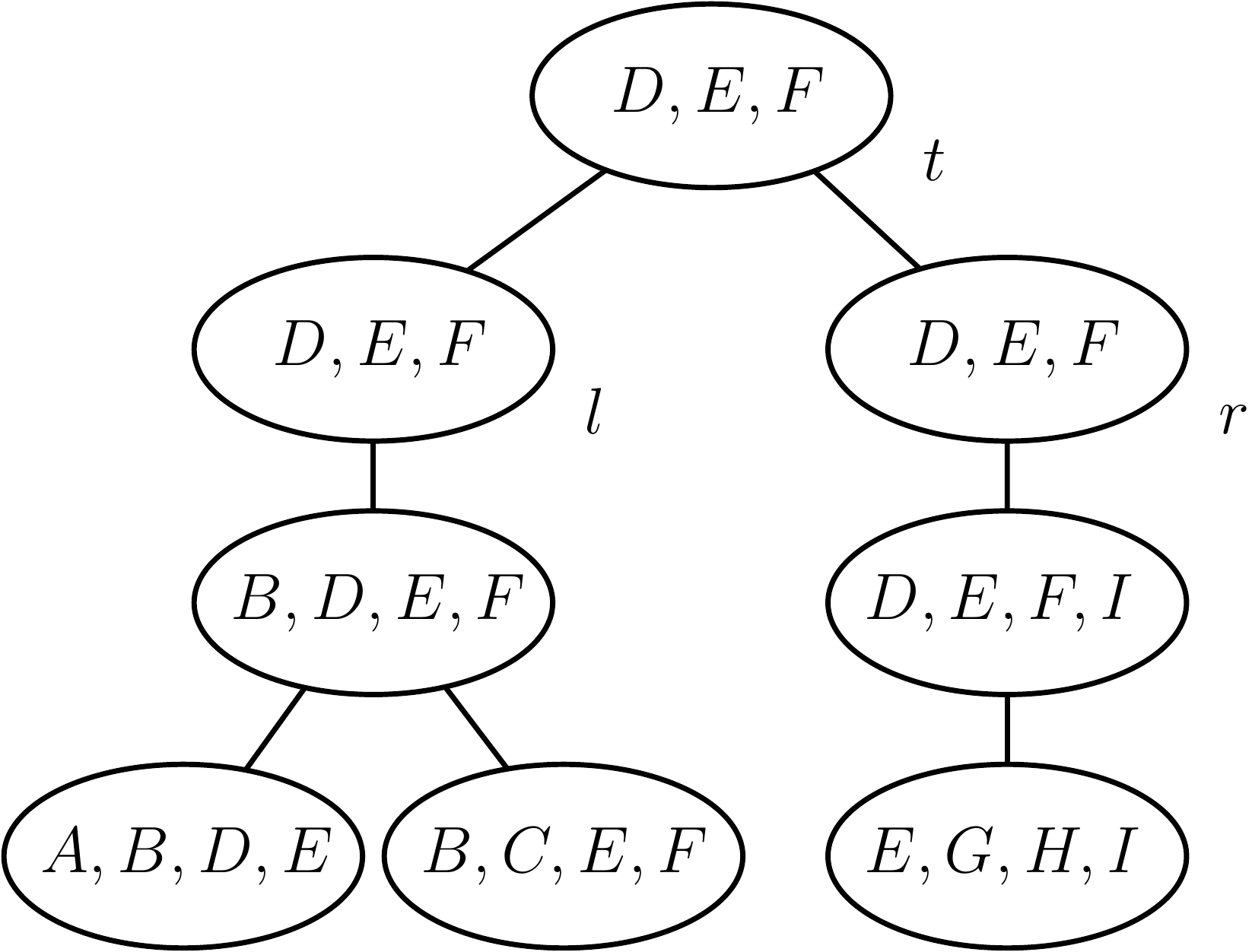}}
     }
     \hfill
     \subfloat[t][Example of a tree decomposition] {
         \includegraphics[width=0.45\textwidth,valign=c]{exampletd.pdf}
         \vphantom{\includegraphics[width=0.45\textwidth,valign=c]{exampletd.pdf}}
     }
     \hfill\null
     \caption{(a) Example of a 9-vertex graph. (b) Tree decomposition for the graph in (a); note that the decomposition given is neither of optimal width nor nice.}
     \label{fig:dp_example}
\end{figure}

The following example illustrates the characteristics $(R, \sim, k)$.
\begin{example}\label{example:td}

\sloppy Consider the graph shown in Figure~\ref{fig:dp_example}a and the tree decomposition shown in Figure~\ref{fig:dp_example}b. The induced subgraph $G[r]$ associated with node $r$ consists of vertices $D,E,F,G,H,I$. The subset $\{G,D\}$ is not good because the connected component $\{G\}$ has an empty intersection with $X_r=\{D,E,F\}$. The subset $\{D,E,G,H\}$ is good, and has characteristic $(\{D,E\}, \{\{D\},\{E\}\}, 4)$ and is the only subset having this characteristic, thus $n_l(\{D,E\}, \{\{D\},\{E\}\}, 4)=1$. The subset $\{D,E,G,I\}$ is also good, and has characteristic $(\{D,E\}, \{\{D,E\}\}, 4)$. Since the subset $\{D,E,H,I\}$ has the same characteristic, we have that $n_r(\{D,E\}, \{\{D,E\}\}, 4)=2$.
$\hfill\Box$\\
\end{example}

For every node $t\in T$ we compute $n_t(R, \sim, k)$ and $w_t(R, \sim, k)$ for each characteristic $(R, \sim, k)$ in a bottom-up fashion. We start at the leaf vertices, and then work our way up the root of the tree. We handle each of the cases as follows:

\emph{Leaf.} If $t\in T$ is a leaf node, then $X_t=\{v\}$ for some $v\in V$. Since $G[t]$ is a singleton vertex, $t$ has exactly two characteristics $c_1=(\emptyset, \emptyset, 0)$ and $c_2=(\{v\}, \{(v,v)\}, 1)$ (corresponding to the only two subsets $S \subset G[t]$, which are the empty set and the singleton $\{v\}$). It is easy to see that $n_t(c_1)=1, w_t(c_1)=0$ and $n_t(c_2)=1, w_t(c_1)=w(v)$. Pseudocode for the Leaf procedure is given in Listing~\ref{lst:leaf}.

\begin{algorithm}
\caption{Algorithm for computing the Shapley value -- Leaf Procedure}
\label{lst:leaf}
\begin{algorithmic}
	\Require Leaf node $t$ of a tree decomposition.
	\Ensure Dynamic programming tables $n_t$ and $w_t$.
	\Procedure{Leaf}{$t$}
		\State \Let $v\in V(G)$ be the vertex such that $X_t=\{v\}$
		\State \Let $n_t(\emptyset,\emptyset,0)=1$, $w_t(\emptyset,\emptyset,0)=0$ and 
$n_t(\{v\},\{\{v\}\},1)=1$, $w_t(\{v\},\{\{v\}\},1)=w(v)$
		\State \Return $(n_t,w_t)$
	\EndProcedure
\end{algorithmic}
\end{algorithm}

\FloatBarrier

\emph{Introduce.} If $t\in T$ is an introduce node, it has a single child $t'\in T$ and $X_t = X_{t'} \dot{\cup} \{v\}$ for some $v\in V(G)$. Every characteristic $(R,\sim,k)$ (w.r.t. $G[{t'}]$) corresponds to $n_{t'}(R,\sim,k)$ distinct subsets of $G[{t'}]$, and we may extend these subsets $S\subseteq G[t']$ to subsets of $G[t]$ by either adding the introduced vertex $v$ or not. Thus, the $n_{t'}(R,\sim,k)$ subsets of $G[{t'}]$ give rise to

\begin{enumerate}
	\item when not adding $v$, to $n_{t'}(R,\sim,k)$ good subsets of $G[t]$ with characteristic $(R,\sim,k)$ and total weight $w_t(R,\sim,k)$, and
	
	\item when adding $v$, if $k = 0$ or $R\not = \emptyset$, to $n_{t'}(R,\sim,k)$ good subsets of $G[t]$ with characteristic $(R\cup\{v\},\sim',k+1)$ and total weight $w_{t'}(R,\sim,k) + n_{t'}(R,\sim,k) \cdot w(v)$,
	
	\item when adding $v$, if $k\not = 0$ and $R=\emptyset$, then $S\cup \{v\}$ has at least two connected components, (at least) one of which does not intersect $X_t$, so it is not a good subset,
\end{enumerate}

where $\sim'$ is the relation obtained as the transitive closure of $\sim \cup \{(v,v)\}\cup \{(v,x) \mid x\in R, (v,x)\in E(G)\}$.

Note that two distinct characteristics $(R,\sim,k)$ and $(R',\sim',k')$ with $R=R'$ and $k=k'$ (but $\sim\not = \sim'$) may give rise (upon addition of the vertex $v$) to $n_{t'}(R,\sim,k)+n_{t'}(R',\sim',k')$ subsets with the same characteristic $(R\cup \{v\}, \sim'', k+1)$ with total weight $w_{t'}(R,\sim,k) + w_{t'}(R',\sim',k') + (n_{t'}(R,\sim,k) + n_{t'}(R',\sim',k')) \cdot w(v)$. Therefore, we can compute $n_t(R\cup\{v\},\sim',k+1)$ (and similarly, $w_t(R\cup\{v\},\sim',k+1)$), by taking the sum of $n_{t'}(R,\sim,k)$ over all $\sim$ such that $\sim'$ is the transitive closure of $\sim \cup \{(v,v)\}\cup \{(v,x) \mid x\in R, (v,x)\in E(G)\}$. Pseudocode for the Introduce procedure that illustrates this summation is given in Listing~\ref{lst:introduce}.

\begin{algorithm}
\caption{Algorithm for computing the Shapley value -- Introduce Procedure}
\label{lst:introduce}
\begin{algorithmic}
	\Require Introduce node $t$ of a tree decomposition; dynamic programming tables $n_{t'}$ and $w_{t'}$ for child node $t'$.
	\Ensure Dynamic programming tables $n_t$ and $w_t$.
	\Procedure{Introduce}{$t,n_{t'},w_{t'}$}
		\State \Let $v\in V(G)$ be the vertex such that $X_t=\{v\} \dot {\cup}X_{t'}$
		\For{\Each $(R,\sim,k)$ in the domain of $n_{t'}$}
			\State \Update $n_t(R,\sim,k)=n_t(R,\sim,k) + n_{t'}(R,\sim,k)$
			\State \Update $w_t(R,\sim,k)=w_t(R,\sim,k) + w_{t'}(R,\sim,k)$
			\If{$R\not = \emptyset$ \textbf{or} $k=0$}
				\State \Let $\sim' = \sim$
				\For{\Each $u\in R$}
					\If{$(u,v)\in E(G)$}
						\State \textbf{union} the partition classes of $u$ and $v$ in $\sim'$
					\EndIf
				\EndFor
				\State \Update $n_t(R\cup\{v\},\sim',k+1)=n_t(R\cup\{v\},\sim',k+1) + n_{t'}(R,\sim,k)$
				\State \Update $w_t(R\cup\{v\},\sim',k+1)=w_t(R\cup\{v\},\sim',k+1) + w_{t'}(R,\sim,k) + n_{t'}(R,\sim,k) \cdot w(v)$
			\EndIf
		\EndFor
		\State \Return $(n_t,w_t)$
	\EndProcedure
\end{algorithmic}
\end{algorithm}

The following lemma (c.f. Lemma \ref{lem:conninduced}) ensures the correctness of the introduce step:

\begin{lemma}
Let $t\in T$ and suppose that $t$ is an introduce node with child $t'\in T$. Suppose $S\subseteq G[t]$ is a good subset. Then $S\cap G[{t'}]$ is a good subset of $G[{t'}]$.
\end{lemma}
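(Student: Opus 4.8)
The plan is to mirror the proof of Lemma~\ref{lem:conninduced} and argue by contradiction, with the separator property of Lemma~\ref{lem1} doing the essential work. Write $v$ for the vertex introduced at $t$, so that $X_t = X_{t'}\,\dot{\cup}\,\{v\}$, and set $S' = S\cap V(G[{t'}])$. The first thing I would nail down is that $v\notin V(G[{t'}])$: since $v\in X_t$ but $v\notin X_{t'}$, the third tree-decomposition axiom (the bags containing $v$ induce a connected subtree) forbids $v$ from occurring in any descendant of $t'$, and hence from $V(G[{t'}])$. Consequently $S' = S\setminus\{v\}$ (or $S' = S$ when $v\notin S$), and for any $W\subseteq V(G[{t'}])$ the induced subgraph $G[W]$ is the same whether computed in $G[{t'}]$, in $G[t]$, or in $G$, so the two notions of ``component'' never disagree.

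Assuming toward a contradiction that $S'$ is \emph{not} good, the definition hands me a concrete witness: $G[S']$ is disconnected and some connected component $C'$ of $G[S']$ satisfies $C'\cap X_{t'} = \emptyset$. Note that $C'\subseteq V(G[{t'}])\setminus X_{t'}$ and $v\notin C'$. I would then split on whether $v$ has a neighbour in $C'$.

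If $v$ is adjacent (in $G$) to some $w\in C'$, then $(v,w)\in E(G)$ with $w\in V(G[{t'}])$ and $v\notin V(G[{t'}])$; applying Lemma~\ref{lem1} to $t'$ forces $w\in X_{t'}$, contradicting $C'\cap X_{t'}=\emptyset$. If instead $v$ has no neighbour in $C'$, then since the only edges of $G[S]$ absent from $G[S']$ are those incident with $v$, the set $C'$ is already a full connected component of $G[S]$; and because $C'\cap X_{t'}=\emptyset$ and $v\notin C'$, we get $C'\cap X_t=\emptyset$. Moreover $G[S]$ is itself disconnected (either $v\notin S$ and $G[S]=G[S']$ was disconnected, or $v\in S\setminus C'$ forces at least two components). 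Thus $G[S]$ is disconnected and has a component missing $X_t$, contradicting the goodness of $S$. Either branch closes the argument.

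I expect the only genuinely delicate points to be (i) justifying $v\notin V(G[{t'}])$, which is exactly what lets me treat $v$ as an ``outside'' vertex when invoking the separator lemma, and (ii) respecting the fact that ``good'' permits a \emph{connected} set to miss $X_{t'}$ entirely. The latter is precisely why I assume $G[S']$ is disconnected \emph{before} extracting the bad component $C'$, and why the non-adjacent case needs the explicit verification that $G[S]$ itself is disconnected rather than merely exhibiting a component disjoint from $X_t$. The separator application in the adjacent case is the heart of the matter, and it is essentially the same step already used in the proof of Lemma~\ref{lem:conninduced}.
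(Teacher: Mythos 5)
Your proof is correct and follows essentially the same route as the paper's: assume a component $C'$ of $S\cap G[t']$ disjoint from $X_{t'}$, and use the separator property (Lemma~\ref{lem1}) to show the introduced vertex $v$ cannot be adjacent to $C'$, whence $C'$ would also witness that $S$ itself is not good. Your explicit case split on whether $v$ has a neighbour in $C'$ simply fills in a step the paper's proof asserts without elaboration (``$v$ must be adjacent to some vertex of $C$''), which is a welcome clarification but not a different argument.
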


\begin{proof}
Suppose that $S\cap G[{t'}]$ is not connected, and some connected component $C$ of $S\cap G[{t'}]$ has an empty intersection with $X_{t'}$. Suppose the introduced vertex is $v$. Then $v$ must be adjacent to some vertex of $C$, but this is impossible since $C\cap X_{t'}=\emptyset$ and $v$ is not incident to $G[{t'}]\setminus X_{t'}$.

This ensures that we count each good subset $S\subseteq G[t]$ at least once. The at most once statement follows from the fact that $S\cap G[{t'}]$ corresponds to a unique characteristic w.r.t $G[t]$.
\end{proof}

\FloatBarrier

\begin{algorithm}[h]
\caption{Algorithm for computing the Shapley value -- Forget Procedure}
\label{lst:forget}
\begin{algorithmic}
	\Require Forget node $t$ of a tree decomposition; dynamic programming tables $n_{t'}$ and $w_{t'}$ for child node $t'$.
	\Ensure Dynamic programming tables $n_t$ and $w_t$.
	\Procedure{Forget}{$t,n_{t'},w_{t'}$}
		\State \Let $v\in V(G)$ be the vertex such that $X_t=X_{t'} \setminus \{v\}$
		\For{\Each $(R,\sim,k)$ in the domain of $n_{t'}$}
			\If{$v\not\in R$}
				\State \textbf{update} $n_t(R,\sim,k)=n_t(R,\sim,k) + n_{t'}(R,\sim,k)$
				\State \Update $w_t(R,\sim,k)=w_t(R,\sim,k) + w_{t'}(R,\sim,k)$
			\ElsIf{$|R|=1$ \textbf{or} $v$ is not a singleton in $\sim$}
				\State \Let $\sim'$ be the restriction of $\sim$ to $X_t$
				\State \Update $n_t(R\setminus\{v\},\sim',k)=n_t(R\setminus\{v\},\sim',k) + n_{t'}(R,\sim,k)$
				\State \Update $w_t(R\setminus\{v\},\sim',k)=w_t(R\setminus\{v\},\sim',k) + w_{t'}(R,\sim,k)$
			\EndIf
		\EndFor
		\State \Return $(n_t,w_t)$
	\EndProcedure
\end{algorithmic}
\end{algorithm}

\emph{Forget.} If $t\in T$ is a forget node, it has a child ${t'}\in T$ such that $X_{t'} \dot{\cup} \{v\} = X_t$ for some $v\in V(G)$. If for characteristic $(R,\sim,k)$ (w.r.t. $G[{t'}]$), $v\not\in R$, then $(R,\sim,k)$ is also a characteristic w.r.t. $G[t]$. If $v\in R$, then there are three cases:

\begin{enumerate}
\item $R=\{v\}$. Then we obtain $n_{t'}(R,\sim,k)$ good subsets of $G[t]$ with characteristic $\{\emptyset,\emptyset,k\}$ and total weight $w_t(R,\sim,k)$,

\item $R\not =\{v\}$ and $\{(v,v)\}\in \sim$. Then none of the $n_{t'}(R,\sim,k)$ good subsets of $G[{t'}]$ are good for $G[t]$, since the connected component containing $v$ does not intersect $X_t$, and there is some other connected component that does intersect $X_t$.

\item Otherwise, we obtain $n_{t'}(R,\sim,k)$ good subsets of $G[t]$ with characteristic $(R\cap X_t, \sim', k)$,
\end{enumerate}

where $\sim'$ is the relation obtained by projecting the relation $\sim$ on $R$ to $R\cap X_t$ (i.e., $\sim' = \sim \cap \{(u,v) \mid u,v\in X_t\}$).

As with the introduce procedure, subsets with a given characteristic wrt. $G[t]$ may correspond to subsets with different characteristics for $G[t']$, so to compute the table entries wrt. $G[t]$ we must once again take the sum of relevant table entries wrt. $G[t']$.  Pseudocode for the Forget procedure is given in Listing~\ref{lst:forget}. The correctness follows from the following Lemma:

\begin{lemma}
Let $t\in T$ and suppose that $t$ is a forget vertex with child ${t'}\in T$. Suppose $S\subseteq G[t]$ is a good subset. Then $S$ is a good subset of $G[{t'}]$.
\end{lemma}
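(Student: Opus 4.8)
The plan is to exploit the fact that, at a forget node, the induced subgraph does not change; only the bag shrinks. Since $t'$ is a child (hence a descendant) of $t$ and the forgotten vertex $v$ lies in $X_{t'}$, we have $v\in V(G[t])$, and therefore $V(G[t])=V(G[{t'}])$. Consequently any $S\subseteq V(G[t])$ is equally a subset of $V(G[{t'}])$, and the induced subgraph $G[S]$ — which depends only on $S$ and $G$ — is literally the same object whether we regard $S$ as a subset of $G[t]$ or of $G[{t'}]$. The only difference between the two nodes is the bag, and here the inclusion $X_t = X_{t'}\setminus\{v\} \subseteq X_{t'}$ holds.

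With this setup, I would argue that goodness is monotone under enlarging the bag. Recall that $S$ is good with respect to $G[t]$ precisely when either $G[S]$ is connected, or every connected component of $G[S]$ has nonempty intersection with $X_t$. I would split into these two cases. If $G[S]$ is connected, then $S$ is good with respect to $G[{t'}]$ immediately by the ``connected'' clause of the definition, with no reference to any bag. Otherwise every connected component of $G[S]$ meets $X_t$; since $X_t\subseteq X_{t'}$, each such component also meets $X_{t'}$, so again $S$ is good with respect to $G[{t'}]$. In either case the conclusion follows, and the empty set is handled trivially, being good by definition at every node.

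I do not expect any genuine obstacle here. Unlike the introduce step, where the vertex set $V(G[{t'}])$ was strictly smaller than $V(G[t])$ and one had to invoke the separator property (Lemma~\ref{lem1}) to rule out a component hiding beyond the child's reach, the forget step alters nothing about the underlying graph and merely relaxes the membership test from $X_t$ to the larger set $X_{t'}$. The entire content of the lemma is thus the inclusion $X_t\subseteq X_{t'}$ together with the identity $V(G[t])=V(G[{t'}])$; everything else is a one-line case distinction on the definition of \emph{good}.
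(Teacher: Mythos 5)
Your proposal is correct and follows essentially the same route as the paper: goodness at a forget node is monotone because $V(G[t])=V(G[t'])$ and $X_t\subseteq X_{t'}$, so every component meeting $X_t$ also meets $X_{t'}$. In fact your write-up is slightly more careful than the paper's own one-line proof, which loosely says ``$S$ has non-empty intersection with $X_t$'' where it means that every connected component of $G[S]$ does.
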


\begin{proof}
If $S$ is not connected, then $S$ has non-empty intersection with $X_t$. Since $X_t\subset X_{t'}$, $S$ also has non-empty intersection with $X_{t'}$.
\end{proof}



\FloatBarrier

\emph{Join.} If $t\in T$ is a join node, then it has two children $l,r$ such that $X_l=X_r=X_t$. Suppose that $(R_l, \sim_l, k_l)$ is a characteristic of $l$ and $(R_r, \sim_r, k_r)$ is a characteristic of $r$ and suppose that $R_l=R_r$. Then there are $n_l(R_l, \sim_l, k_l) \cdot n_r(R_r, \sim_r, k_r)$ subsets with characteristic $(R_l, \sim', k_l + k_r - |R_l|)$ and total weight $n_l(R_l, \sim_l, k_l) \cdot w_r(R_r, \sim_r, k_r) + n_r(R_r, \sim_r, k_r) \cdot w_l(R_l, \sim_l, k_l) - n_l(R_l, \sim_l, k_l) \cdot n_r(R_r, \sim_r, k_r) \cdot (\Sigma_{v\in R_l} w(v))$, where $\sim$ is the transitive closure of $\sim_l\cup \sim_r$. Pseudocode for the Join procedure is given in Listing~\ref{lst:join}.

Similarly to the introduce and forget cases, multiple distinct characteristics for $l,r$ may, when combined, correspond to the same characteristic for $t$; we should again take the sum over these characteristics. The correctness follows from the following Lemma:

\begin{lemma}
Let $t\in T$ and suppose that $t$ is a join node with children $l,r\in T$. Suppose $S\subseteq G[t]$ is a good subset. Then $S\cap V(G[l])$ (resp. $S\cap V(G[r])$) is a good subset of $G[l]$ (resp. $G[r]$).
\end{lemma}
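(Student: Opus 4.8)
The plan is to argue by contradiction, exactly paralleling the proof of Lemma~\ref{lem:conninduced} and the analogous statements established for the introduce and forget nodes. By the symmetry of $l$ and $r$ it suffices to treat $S\cap V(G[l])$. So I would suppose, for contradiction, that $S\cap V(G[l])$ is not a good subset of $G[l]$. By the definition of \emph{good}, this means $G[S\cap V(G[l])]$ is disconnected and some connected component $C$ of it satisfies $C\cap X_l=\emptyset$; recall that $X_l=X_t$ for a join node.

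The crucial step is to show that $C$ is in fact a connected component of the whole induced subgraph $G[S]$, not merely of the restriction $G[S\cap V(G[l])]$. To this end I would first record the structural fact that, for a join node, $V(G[l])\cap V(G[r])=X_t$, so every vertex of $S$ lying outside $V(G[l])$ actually lies in $V(G[r])\setminus X_t$. Now suppose some vertex $u\in C$ had a neighbour $w\in S\setminus C$. If $w\in V(G[l])$, then $w$ would lie in the same component of $G[S\cap V(G[l])]$ as $u$, contradicting the maximality of $C$. If $w\notin V(G[l])$, then $(u,w)$ is an edge with $u\in C\subseteq V(G[l])\setminus X_l$ (using $C\cap X_l=\emptyset$) and $w\notin V(G[l])$, which contradicts Lemma~\ref{lem1}, since $X_l$ separates $G[l]$ from the rest of the graph and $u\notin X_l$. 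Hence $C$ has no neighbour in $S$ outside itself, so $C$ is a connected component of $G[S]$ with $C\cap X_t=\emptyset$.

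From here the contradiction with the goodness of $S$ falls out in two cases. If $G[S]$ is connected, then $C=S$, so $S\subseteq V(G[l])$ and therefore $S\cap V(G[l])=S$ is connected, contradicting our assumption that it is not good. Otherwise $G[S]$ is disconnected and, by the goodness of $S$, every connected component of $G[S]$ meets $X_t$; but $C$ is a component missing $X_t$, again a contradiction. Since both cases are impossible, $S\cap V(G[l])$ is good, and the argument for $S\cap V(G[r])$ is identical with the roles of $l$ and $r$ interchanged.

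I expect the main obstacle to be the middle step, namely promoting $C$ from a component of $G[S\cap V(G[l])]$ to a component of $G[S]$. This is where the separator property of Lemma~\ref{lem1} is genuinely used, and one must be careful to invoke the identity $V(G[l])\cap V(G[r])=X_t$ so that a putative external neighbour of $C$ is forced to lie outside $V(G[l])$ and thereby contradict the separation. The remaining bookkeeping, namely the two-case contradiction and the symmetric treatment of $r$, is routine.
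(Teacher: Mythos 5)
Your proposal is correct and follows essentially the same route as the paper's proof: use the separator property of $X_l=X_t$ (Lemma~\ref{lem1}) to promote a component $C$ of $G[S\cap V(G[l])]$ that misses $X_l$ to a full connected component of $G[S]$, and then contradict the goodness of $S$. The only difference is organizational (the paper dispatches the case $S\subseteq V(G[l])$ up front, while you split on the connectivity of $G[S]$ at the end), and your version spells out the separator step that the paper leaves implicit.
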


\begin{proof}
By contradiction. We show the case for the left child, the case for the right child is symmetric.

If $S\subseteq V(G[l])$ then the lemma follows automatically. Therefore, assume there exists $v\in S$ such that $v\notin V(G[l])$. In particular, this means that $v\in G[r]\setminus X_r$.

Suppose $S\cap V(G[l])$ is not connected and has a connected component $C$ with empty intersection with $X_l$. Since none of the vertices of $S\cap V(G[r])$ are incident to $C$, $C$ is still a maximal connected component of $S$, but $S$ has at least one other connected component (since $S\cap V(G[l])$ is not connected) and so is not connected, and $C$ has empty intersection with $X_l=X_t$.
\end{proof}

\begin{algorithm}
\caption{Algorithm for computing the Shapley value -- Join Procedure}
\label{lst:join}
\begin{algorithmic}
	\Require Join node $t$ of a tree decomposition; dynamic programming tables $n_l$, $n_r$, $w_l$ and $w_r$ for child nodes $l,r$.
	\Ensure Dynamic programming tables $n_t$ and $w_t$.
	\Procedure{Join}{$t,n_l,n_r,w_l,w_r$}
		\For{\Each $(R,\sim,k)$ in the domain of $n_l$}
			\For{\Each $(R',\sim',k')$ in the domain of $n_r$ such that $R=R'$}
				\State \Let $\sim''$ be the transitive closure of $\sim$, $\sim'$
				\State \Update $n_t(R,\sim'',k+k'-|R|)=n_t(R,\sim'',k+k'-|R|) + n_l(R,\sim,k) \cdot n_r(R,\sim',k')$
				\State \Update $w_t(R,\sim'',k+k'-|R|)=w_t(R,\sim'',k+k'-|R|) + n_l(R,\sim,k) \cdot w_r(R,\sim',k') + n_r(R,\sim',k') \cdot w_l(R,\sim,k) - n_l(R,\sim,k)\cdot n_r(R,\sim',k')\cdot \Sigma_{v\in R} w(v)$
			\EndFor
		\EndFor
		\State \Return $(n_t,w_t)$
	\EndProcedure
\end{algorithmic}
\end{algorithm}

As it is the most complicated procedure, we also give an example illustrating the join procedure.

\begin{example}
\sloppy Consider Figure~\ref{fig:dp_example} of Example \ref{example:td}. Consider the join node $t$ and its
children $l,r$. There are $13$ subsets\footnote{In this example, we refer exclusively to \emph{good} subsets.} (of $G[t]$) with characteristic $(\{D,E\},\{\{D,E\}\}, 5)$. This can be seen as follows, there are:
\begin{itemize}
\item 1 subset of $G[l]$ with characteristic $(\{D,E\},\{\{D,E\}\}, 5)$ times 1 subset of $G[r]$ with characteristic $(\{D,E\},\{\{D\},\{E\}\}, 2)$,
\item 2 subsets of $G[l]$ with characteristic $(\{D,E\},\{\{D,E\}\}, 4)$ times 3 subsets of $G[r]$ with characteristic $(\{D,E\},\{\{D\},\{E\}\}, 3)$,
\item 1 subset of $G[l]$ with characteristic $(\{D,E\},\{\{D,E\}\}, 3)$ times 1 subset of $G[r]$ with characteristic $(\{D,E\},\{\{D\},\{E\}\}, 4)$,
\item 1 subset of $G[l]$ with characteristic $(\{D,E\},\{\{D,E\}\}, 3)$ times 2 subsets of $G[r]$ with characteristic $(\{D,E\},\{\{D,E\}\}, 4)$,
\item 1 subset of $G[l]$ with characteristic $(\{D,E\},\{\{D\},\{E\}\}, 3)$ times 2 subsets of $G[r]$ with characteristic $(\{D,E,\{\{D,E\}\}, 4)$,
\item 1 subset of $G[l]$ with characteristic $(\{D,E\},\{\{D\},\{E\}\}, 2)$ times 1 subset of $G[r]$ with characteristic $(\{D,E,\{\{D,E\}\}, 5)$,
\end{itemize}
and we have that $1\times 1 + 2 \times 3 + 1 \times 1 + 1 \times 2 + 1 \times 2 + 1 \times 1 = 13$.
$\hfill\Box$\\
\end{example}

\FloatBarrier

By processing the vertices of the tree decomposition in a bottom-up fashion, we can compute the values $n_r(R,\sim,k)$ and $w_r(R,\sim,k)$ for all characteristics $(R,\sim,k)$ of the root node $r$. As we have seen before, knowing these values is sufficient to compute the Shapley value of vertex $v$.  Now, we are ready to provide the proof of Theorem~\ref{thm:main}.


\begin{proof}[Proof of Theorem~\ref{thm:main}]
We assume we are given a nice tree decomposition of $G$ (which we may assume has $O(n)$ nodes). For each node, there are $2^{O(tw \log tw)} n$ characteristics. To compute the values for one characteristic requires considering (in the worst case, which is the join node) $2^{O(tw \log tw)} n^2$ pairs of characteristics for the child nodes. For each such pair, we perform a constant number of multiplications of $n$-bit integers, taking $n\log{n}$ time. The dynamic programming table for one node of the tree decomposition takes up $2^{\Theta(tw \log tw)} n^2$ space, but at any given time we only need to keep $O(\log n)$ of them in memory.
\end{proof}

Of course, this only allows us to evaluate the Shapley value for a \emph{single} vertex $v$, under the assumption that for the root bag $r$, $X_r=\{v\}$ (i.e., $v$ is the only vertex in the root bag). To compute the Shapley value for \emph{all} vertices, we perform the following operations, starting from a nice tree decomposition:

\begin{itemize}
	\item For every join node $t$, we create a new node $t'$ with $X_{t'}=X_{t}$. $t'$ is made the parent of $t$, and the original parent of $t$ becomes the parent of $t'$. In case $t$ was the root, $t'$ becomes the root. Note that $t'$ is neither a join, introduce, forget, or leaf node, however, the dynamic programming tables for $t'$ are simply equal to those for $t$ (we shall from now on, refer to nodes such as $t'$ as \emph{nochange} nodes).
	
	\item For every vertex $v\in V(G)$, we pick a node of the tree decomposition $t$ such that $v\in V_t$. We create a copy $t'$ of $t$, which is made the parent of $t$, and the original parent of $t$ becomes the parent of $t'$. In case $t$ was the root, $t'$ becomes the root. Next, we create another copy $t''$ of $t'$. $t'$ is made the parent of $t''$ (making $t'$ into a join node). We then create a series of introduce nodes, starting from $t''$, such that eventually we end up with a leaf node $u$, whose bag contains only $v$. If we now reroot our tree decomposition so that the root becomes $u$, thanks to the previous transformation, every join node remains a join node -- the roles of introduce, forget and nochange nodes can become interchanged.
\end{itemize}

\begin{figure}[t]
     \centering
     \hfill
     \subfloat[t][Nice tree decomposition] {
         \includegraphics[scale=0.35,valign=c]{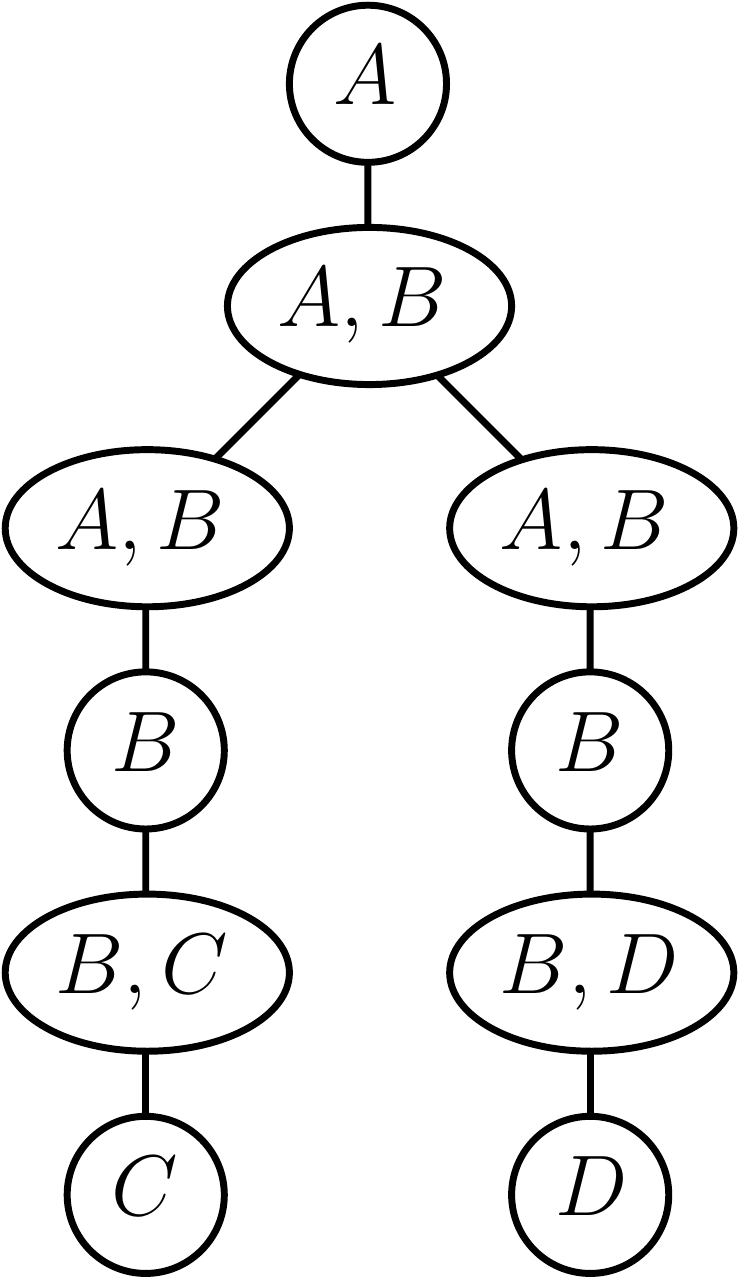}
         \vphantom{\includegraphics[scale=0.35,valign=c]{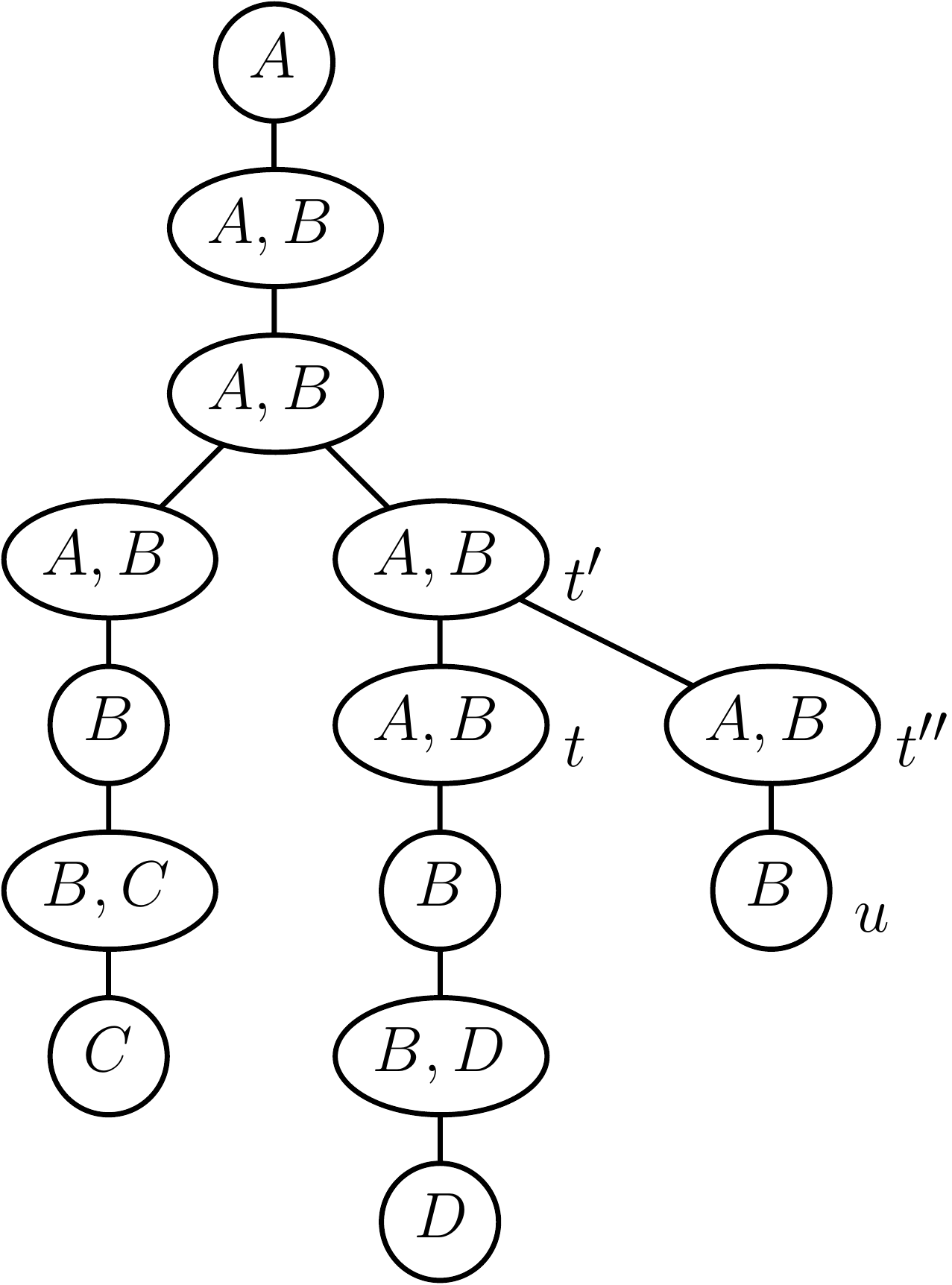}}
     }
     \hfill
     \subfloat[t][With nochange node added] {
         \includegraphics[scale=0.35,valign=c]{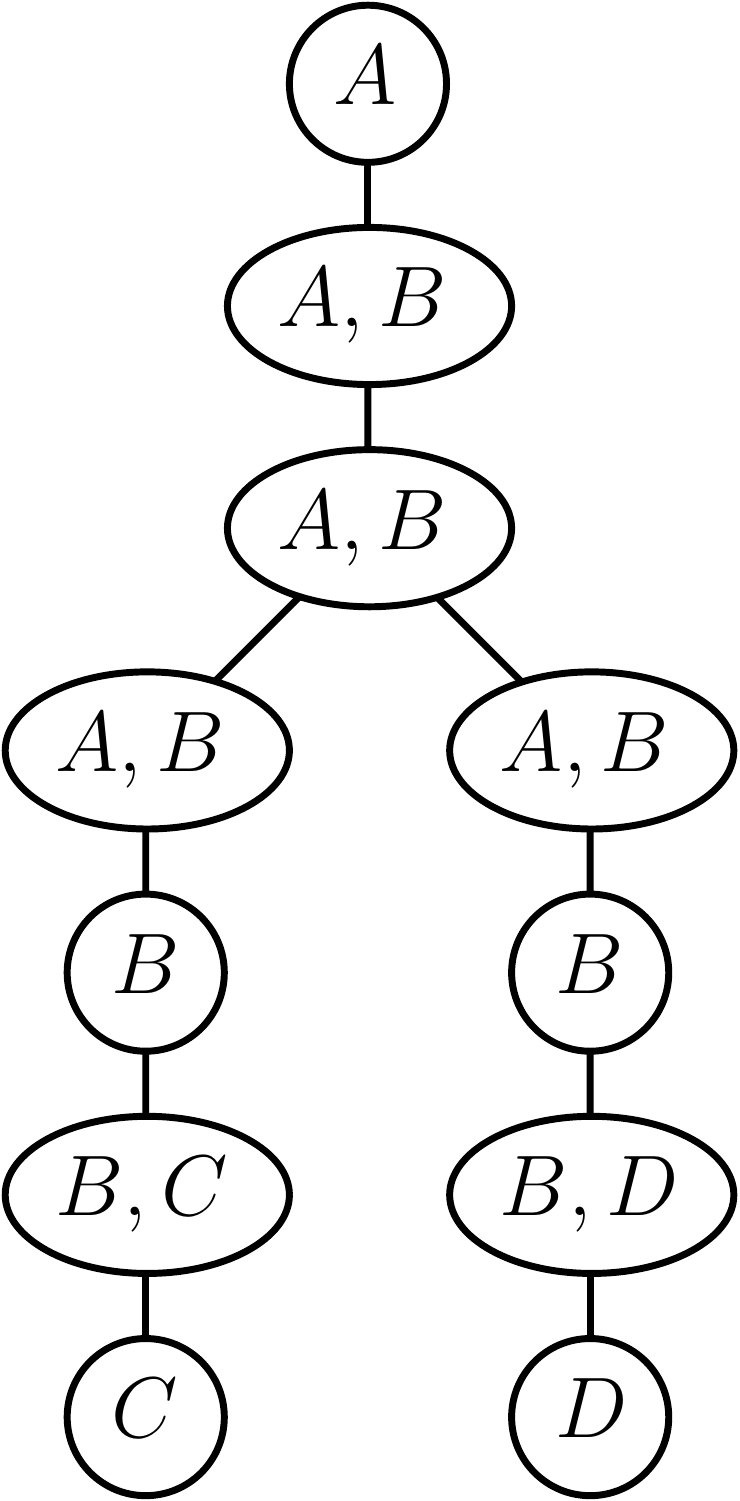}
         \vphantom{\includegraphics[scale=0.35,valign=c]{extranode.pdf}}
     }
     \hfill
          \subfloat[t][Additional nodes for vertex $B$] {
         \includegraphics[scale=0.35,valign=c]{extranode.pdf}
     }
     \hfill\null
     \caption{(a) A (nice) tree decomposition. (b) A nochange node is added before the join bag ${A,B}$. (c) Extra nodes $t'$ and $t''$ are added to enable the creation of a leaf bag containing vertex $B$, which can be used to re-root the decomposition.}
     \label{fig:nochange_example}
\end{figure}

The following example illustrates the two operations described above.
\begin{example}
Figure~\ref{fig:nochange_example} shows an example of this process. Starting from a nice tree decomposition (Figure~\ref{fig:nochange_example}a) a nochange node is added before the join bag ${A,B}$ (Figure~\ref{fig:nochange_example}b). To create a leaf bag for vertex $B$, we pick a bag $t$ containing it (in this example the right child of the join bag), insert a node $t'$ which becomes the parent of $t$, create an additional child (of $t'$) $t''$ (thus making $t'$ into a join node), then add a leaf bag $u$ (containing only $B$) as child of $t''$ (making $t''$ into a forget node).

This process can be repeated until for each vertex $v\in V(G)$ there exists a leaf bag containing it. Note that in the example the tree decomposition is rooted at $A$, but it can also be viewed as being rooted at $u$ (or any other leaf node); this turns $t''$ from a forget node into a nochange node, $t'$ remains a join node, while the nochange node ${A,B}$ (currently a child of the root node $A$) becomes an introduce node (introducing $B$ to the leaf node containing $A$).
$\hfill\Box$\\
\end{example}

Thus, we now have a tree decomposition that can be rerooted such that any vertex $v$ becomes the sole vertex in the root bag. However, this only gives a $2^{\Theta(tw \log tw)} n^5 \log{n}$-time algorithm for computing the Shapley values for all the vertices in a given graph, since this would require running the algorithm separately for each root vertex. However, there is a lot of overlap in these computations, as the dynamic programming tables for each subtree may be computed multiple times. By memoizing a table when it is computed \citep[similar to belief propagation in Bayesian Networks, see e.g.][]{pearl}, we thus obtain a $2^{O(tw \log tw)} n^4 \log{n}$-time algorithm using $2^{\Theta(tw \log tw)} n^3$ space:

\begin{theorem}
Given a graph $G$ of treewidth at most $tw$, the Shapley value of all vertices $v\in V(G)$ can be computed in time $2^{\Theta(tw \log tw)} n^4 \log{n}$ and space $2^{\Theta(tw \log tw)} n^3$.
\end{theorem}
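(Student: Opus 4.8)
The plan is to build on Theorem~\ref{thm:main} together with the two tree-decomposition modifications described above, turning the single-vertex algorithm into an orientation-symmetric message-passing computation in the style of belief propagation \citep{pearl}, so that the work shared between the $n$ different rootings is performed only once. First I would make precise the effect of the modifications. After adding a \emph{nochange} node above every join node and, for each vertex $v\in V(G)$, attaching the gadget that terminates in a leaf bag $u_v$ with $X_{u_v}=\{v\}$, the resulting decomposition $T'$ is still a valid tree decomposition of $G$, with $O(tw\cdot n)$ nodes (the factor $tw$ being absorbed into the $2^{\Theta(tw\log tw)}$ term of the final bound). The crucial structural property, argued above, is that $T'$ may be rerooted at any leaf bag $u_v$ so that $v$ becomes the sole vertex of the root bag, and that under every such rerooting each join node remains a join node; only the introduce, forget and nochange types interchange. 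Hence the four procedures of Listings~\ref{lst:leaf}--\ref{lst:join}, together with the trivial copy rule for nochange nodes, remain applicable in either orientation of every edge, and rooting $T'$ at $u_v$ and running \textsc{ComputeShapley} yields $\phi_v$.

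Next I would reformulate the dynamic program as message passing on the undirected tree $T'$. For each ordered pair of adjacent nodes $(s,t)$ define the message $M_{s\to t}=(n_{s\to t},w_{s\to t})$ to be the pair of dynamic-programming tables the algorithm computes at $s$ when $T'$ is oriented so that $t$ is the parent of $s$ — equivalently, the tables recording, for each characteristic $(R,\sim,k)$ with $R\subseteq X_s$, the number and total weight of good subsets of the component of $T'$ lying on the $s$-side of the edge $st$. By the correctness lemmas for the leaf, introduce, forget and join steps, $M_{s\to t}$ is determined entirely by the type of $s$ in this orientation and by the incoming messages $\{M_{c\to s}: c\text{ adjacent to }s,\ c\neq t\}$, and does not otherwise depend on the global choice of root. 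When $T'$ is rooted at $u_v$, the root table is obtained from the single incoming message at $u_v$, and the closed-form expression in Listing~\ref{lst:main} then returns $\phi_v(v^{conn})$ and $\phi_v(v^{wconn2})$.

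The key quantitative observation is that a tree on $N$ nodes has $N-1$ edges, hence $2(N-1)=O(tw\cdot n)$ directed edges, so there are only $O(tw\cdot n)$ distinct messages in total, independent of how many rootings we consider. I would compute them all in a two-pass sweep (leaves-inward, then root-outward), memoizing each message the first time it is produced; this is exactly the reuse of intermediate tables alluded to above. Each message costs at most the per-node bound from the proof of Theorem~\ref{thm:main}, namely $2^{O(tw\log tw)}n^{2}$ characteristic pairs times $O(n\log n)$ per $n$-bit multiplication, that is $2^{O(tw\log tw)}n^{3}\log n$. Multiplying by the $O(tw\cdot n)$ messages gives the claimed running time $2^{O(tw\log tw)}n^{4}\log n$, and reading off $\phi_v$ for all $v$ from the $n$ root tables adds only lower-order work. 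For the space bound, storing all $O(tw\cdot n)$ messages simultaneously — as reuse requires — costs $O(tw\cdot n)$ tables of size $2^{\Theta(tw\log tw)}n^{2}$, i.e.\ $2^{\Theta(tw\log tw)}n^{3}$.

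The main obstacle I expect is the correctness of the symmetric message-passing formulation under rerooting: one must verify that each node's transition rule is genuinely orientation-independent, so that an introduce node read ``from below'' behaves as a forget node and vice versa, that nochange nodes act as the identity, and above all that the nochange-above-join gadget really does keep every join node a join node under all $n$ rootings. Establishing that $M_{s\to t}$ depends only on $s$'s local type-in-orientation and its other incoming messages is the heart of the argument; once it is in place, the edge-counting and the cost accounting are routine consequences of the analysis already carried out for Theorem~\ref{thm:main}.
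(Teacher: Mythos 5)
Your proposal is correct and follows essentially the same route as the paper: the nochange/leaf-gadget modifications, rerooting at a leaf bag for each vertex, and memoizing the per-directed-edge tables in the style of belief propagation, with the same edge-counting and per-message cost accounting inherited from Theorem~\ref{thm:main}. Your explicit message-passing formulation and the observation that the added forget/introduce chains contribute only an $O(tw)$ factor (absorbed into $2^{\Theta(tw\log tw)}$) merely make precise what the paper leaves informal.
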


\section{Computational Experiments}

In this section, we experimentally evaluate our algorithm. We test it on several real-world (covert) social networks and also on several (artificial) benchmark graphs. We show that our algorithm can compute the Shapley value for these networks in a reasonable amount of time.

We tested our algorithm using the following covert networks found in the literature:

\begin{itemize}
\item A network of 69 of individuals involved in 9/11 attacks (9-11), where edges represent some kind of tie (such as cooperating in an attack, financial transactions or having trained together) \citep{krebs}.

\item A network of 77 Islamic State members in Europe (ise-extended), where edges represent some kind of tie (such as cooperating in an attack, being related or being present in the same location) \citep{ise-extended}.

\item A network of 293 drug users (drugnet), where edges represent acquaintanceships \citep{drugnet}.

\item A network of 36 Montreal gangs (montreal), where edges represent ties between gangs \citep{montreal}.

\item A network of 67 members of Italian gangs (italian), where an edge represents joint membership of a gang \citep{italian}.
\end{itemize}

We also tested our algorithm on several benchmark graphs from the 2018 PACE challenge \citep{pace2018}. These graphs are not social networks but are intended to demonstrate the capabilities of our algorithms on graphs with a range of treewidth values and vertex counts.

For each network, we considered the largest connected component. Each of these networks has relatively low treewidth. The Islamic State network has the highest treewidth ($13$), while the Italian Gang network is very sparse (treewidth $3$). We also considered using the Noordins top terrorist network \citep{disrupting}. However, as this 79-vertex network has treewidth at least $19$, applying our techniques is not feasible.

Our implementation simultaneously computes the value of both $v^{conn}$ and $v^{wconn2}$ (we set all weights to $1$ for these experiments, resulting in a game where the value of a connected coalition is equal to its size).

Table~\ref{tab:performance} reports computational performance results on these benchmark graphs. Our implementation uses the .NET BigInteger library, which performs multiplications in $\Theta(n^2)$ time using a method similar to grid multiplication. While there are several asymptotically faster methods for multiplication, and we experimented with several such implementations, none of these resulted in a significant speed up for the graphs considered. The time reported is that for computing the Shapley values of \emph{all} vertices in the graph, using the method that stores all intermediate tables to achieve a $2^{O(tw \log{tw})} n^4 \log{n}$ computation time. The time reported does not include the time for computing a tree decomposition, however there are many algorithms that can quickly compute a tree decomposition for many graphs of practical interest \citep[see e.g.,][]{tw-pid}.

We are able to compute the Shapley value for each of the covert networks in less than two minutes. For the 9/11 network, our computation took only 5.3 seconds. The Shapley value for this network has previously been approximated by \citet{approximation}, using a method based on a random sampling of 10.000 permutations of the players in the network. Lindelauf et al. report that this computation of approximate Shapley values took 5 minutes. Our method is not only exact, but also much faster.

Of course, the method of \citet{approximation} can be applied to any graph rather than just to graphs of small treewidth. However, it is not yet known how the performance of their approximation depends on the structure of the graph. Still, when the treewidth of the graph is small, our method provides an excellent way to compute exact Shapley values.

The IS in Europe network has treewidth 13. Despite this relatively high treewidth, our algorithm was still able to compute the Shapley value in 38.7 seconds. Our algorithm can thus handle graphs even with moderate treewidth quite quickly. It can also handle graphs with large numbers of vertices, although it appears from the results on the PACE networks that the polynomial factor in the running time ($n^4$) starts to dominate rather than the dependence on the treewidth.

\begin{table}[h]
  \begin{center}
  \footnotesize
  \begin{tabular}{| c | c | c | c | c | }
	\hline
	Graph & Treewidth & Vertices & Edges & Time (seconds) \\
	\hline
italian	&	3		&	65	&	113	&	0.6 \\ \hline
montreal	&	6		&	29	&	75	&	0.4 \\ \hline
9-11	&	8		&	69	&	163	&	5.3 \\ \hline
drugnet	&	8		&	193	&	273	&	119.4 \\ \hline
ise-extended	&	13		&	77	&	274	&	38.7 \\ \hline
pace\_005	&	5		&	201	&	253	&	31.1 \\ \hline
pace\_012	&	5		&	572	&	662	&	1746 \\ \hline
pace\_022	&	6		&	732	&	1084	&	22868 \\ \hline
pace\_023	&	6		&	990	&	1258	&	30255 \\ \hline
pace\_028	&	7		&	139	&	202	&	2262 \\ \hline
pace\_070	&	10		&	106	&	399	&	50.0 \\ \hline
  \end{tabular}
  \end{center}
  \caption{Performance of the algorithm on several real-world  networks and several benchmark graphs from the PACE 2018 challenge. For disconnected graphs, we considered only the largest connected component in the graph (for which the number of vertices and edges is given).}\label{tab:performance}
\end{table}

\FloatBarrier

\section{Conclusions}

Game-theoretic centrality measures are a powerful tool for identifying important vertices in networks. We have shown that, using treewidth, two game-theoretic centrality measures can be practically computed on graphs much larger than previously feasible, allowing us to analyze larger networks than before.

Our algorithm runs in time $2^{O(tw\log{tw})}n^{O(1)}$. The log-factor in the exponent is due to the need to keep track of a connectivity partition. A very interesting open question is whether the algorithm can be improved to have single-exponential running time, that is, is it possible to attain a $2^{O(tw)}n^{O(1)}$-time algorithm? For several (counting) problems involving connectivity, this is indeed possible: For instance, it is possible to count Hamiltonian Cycles or Steiner Trees in single-exponential time \citep{singleexp} by using approaches involving matrix determinants. Either a positive answer to this question or a conditional lower bound ruling out such an algorithm would be interesting.

We remark that the log-factor in the exponent represents only the worst case. However, since we are dealing with induced subgraphs, if two vertices share an edge, they can never be in two distinct connected components. Therefore, the actual number of connectivity partitions considered may be lower than suggested by the worst case bound. It would be interesting to see if it is possible to take this phenomenon into account when generating a tree decomposition: perhaps it would be possible to optimize a tree decomposition to limit the number of feasible partitions (for instance, by giving preference to bags that are cliques). Such an approach has previously been considered for Steiner Tree \citep{optimizetd}.

With a trivial adaptation, our algorithm can also be used to compute the Banzhaf value \citep{banzhaf} for $v^{conn}$ and $v^{wconn2}$; this requires merely a change in weighting values. 
Techniques similar to ours can also be used to evaluate other connectivity games, e.g., the Shapley value for $v^{wconn1}$ and $v^{wconn3}$ \citep{lindelauf13} can be computed by extending our notion of characteristic to also include the maximum weight of an edge in the subgraph induced by $S$.

Another interesting question is whether other connectivity measures can be computed using treewidth. For instance, $v^{conn}$ assigns a value of $0$ to any disconnected coalition, even if there exists a large connected component. It might be more reasonable to make the value of a coalition equal to the size of the largest connected component inside this coalition. It is easy to adapt our techniques to obtain an algorithm running in time $n^{O(tw)}$ for this case; it would be interesting to see if a fixed-parameter tractable algorithm exists.


\bibliographystyle{apalike}

\bibliography{references}


%
%


\end{document}